\newcommand\vnp{\ensuremath{\mathsf{VNP}}}
\newcommand\zz{\ensuremath{\mathbb{Z}}}
\newcommand\nn{\ensuremath{\mathbb{N}}}
\newcommand\p{\ensuremath{\mathsf{P}}}
\newcommand\per{\ensuremath{\mathrm{PER}}}
\newcommand\sharpp{\ensuremath{\mathsf{\sharp P}}}
\newcommand\pp{\ensuremath{\mathsf{PP}}}
\newcommand\rp{\ensuremath{\mathsf{RP}}}
\newcommand\nexp{\ensuremath{\mathsf{NEXP}}}
\newcommand\e{\ensuremath{\mathsf{E}}}
\newcommand\aczero{\ensuremath{\mathsf{AC}^0}}
\newcommand\tczero{\ensuremath{\mathsf{TC}^0}}
\newcommand\dlogtime{\ensuremath{\mathsf{DLOGTIME}}}
\newcommand\dtime{\ensuremath{\mathsf{DTIME}}}
\begin{document}

\mainmatter

\title{A Superpolynomial Lower Bound on the Size of Uniform Non-constant-depth
  Threshold Circuits for the Permanent} \titlerunning{A Lower Bound for the
  Permanent}

\author{Pascal Koiran\inst{1}\and Sylvain Perifel\inst{2}}

\institute{LIP, École Normale Supérieure de Lyon\\
{\tt Pascal.Koiran@ens-lyon.fr}
\and
LIAFA, Université Paris Diderot - Paris 7\\
{\tt Sylvain.Perifel@liafa.jussieu.fr}}

% \institute{ \today}% \\
% LIP, \'Ecole Normale Sup\'erieure de Lyon\\
% {\tt [Pascal.Koiran,Sylvain.Perifel]@ens-lyon.fr}}

\date{\today}

\maketitle

\begin{abstract}
  We show that the permanent cannot be computed by \dlogtime-uniform threshold
  or arithmetic circuits of depth $o(\log \log n)$ and polynomial size.\bigskip

  \textbf{Keywords.} Permanent, lower bound, threshold circuits, uniform
  circuits, non-constant depth circuits, arithmetic circuits.
\end{abstract}

\section{Introduction}

Both in Boolean and algebraic complexity, the permanent has proven to be a
central problem and showing lower bounds on its complexity has become a major
challenge. This central position certainly comes, among others, from its
\sharpp-completeness~\cite{valiant1979b}, its
\vnp-completeness~\cite{valiant1979}, and from Toda's theorem stating that the
permanent is as powerful as the whole polynomial
hierarchy~\cite{toda1989}. More recently, it played a role in the celebrated
and subtle result of Kabanets and Impagliazzo~\cite{KI2004}: either
$\nexp^\rp$ does not have Boolean circuits of polynomial size, or the
permanent does not have arithmetic circuits of polynomial size.

However little is known on the circuit complexity of the permanent in the
general case. Indeed, the best lower bound so far on its circuit size is no
more than the trivial $\Omega(n^2)$ (remember that $\per_n$ has $n^2$
variables). Despite this rather dark state of affairs, some progress has been
made on restricted classes of circuits. For instance, we know lower bounds on
\emph{monotone} circuits (such circuits for the permanent must have
exponential size, see~\cite{JS1982,SV1998}), and recently, lower bounds on
\emph{multilinear} circuits were obtained (see
e.g.~\cite{raz2004,RSY2007,RY2008}).

A lot of work has also been done on \emph{constant-depth} circuits, in which
gates have unbounded fan-in. This line of research has been quite successful
on Boolean circuits and gave deep insights into circuit complexity: see
e.g.~\cite{FSS1984,smolensky1987}. However, pushing the limit of lower bounds
beyond constant depth for polynomial-size circuits has remained elusive
so far.

Another restriction worth studying is \emph{uniformity}: circuits are not
arbitrary any more but are required to be described by a Turing machine. If
this description is very efficient (running in time logartihmic in the size of
the circuit, we speak of \dlogtime-uniformity), Allender~\cite{allender1999}
(see also similar results on circuits with modulo gates in~\cite{AG1994}) has
shown that the permanent does not have threshold circuits of \emph{constant}
depth and subexponential size. In this paper, we obtain a tradeoff between
size and depth: instead of subexponential size, we only prove a
superpolynomial lower bound on the size of the circuits, but now the depth is
no more constant. More precisely, we show the following theorem.
\begin{theorem}\label{th_main}
  The permanent does not have \dlogtime-uniform polynomial-size threshold
  circuits of depth $o(\log\log n)$.
\end{theorem}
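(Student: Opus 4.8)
The plan is to argue by contradiction. Suppose $\per$ is computed by a $\dlogtime$-uniform family of threshold circuits of depth $d(n)=o(\log\log n)$ and size $n^{O(1)}$; I will show this collapses too much of the counting hierarchy into a too-small deterministic time class, which a diagonalization argument forbids.

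\textbf{Propagating the hardness.} Valiant's $\sharpp$-completeness of the permanent can be made $\dlogtime$-uniform: every $f\in\sharpp$ reduces to $\per$ under $\dlogtime$-uniform $\aczero$-projections. Since $\dlogtime$-uniform threshold circuits compose under such reductions, every $f\in\sharpp$, and hence every language in $\pp$, has $\dlogtime$-uniform threshold circuits of size $n^{O(1)}$ and depth $d(n^{O(1)})+O(1)$; because $\log\log(n^{O(1)})=\log\log n+O(1)$, this depth is still $o(\log\log n)$. Iterating, a bounded tower of ``permanent of a matrix whose entries are permanents'' is again a single permanent of polynomially larger size -- the intermediate iterated sums and products lying in $\dlogtime$-uniform $\tczero$ -- so every fixed level of the counting hierarchy $\ch$ enjoys the same bound; and, after a standard padding, so does its quasipolynomially time-bounded analogue, with depth still $o(\log\log n)$ and size now $2^{(\log n)^{O(1)}}$.

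\textbf{Small depth gives fast evaluation.} A $\dlogtime$-uniform threshold circuit of depth $\delta(n)$ and size $s(n)$ can be evaluated deterministically in time $s(n)^{O(\delta(n))}=2^{O(\delta(n)\log s(n))}$: recursively evaluate the output gate, recovering the wiring on the fly from the $\dlogtime$ uniformity machine. For $\delta=o(\log\log n)$ and $s=n^{O(1)}$ this is $n^{o(\log\log n)}$. Combined with the previous step and with Toda's theorem ($\mathsf{PH}\subseteq\p^{\sharpp}$), we obtain that $\mathsf{PH}$, $\p^{\sharpp}$ and every fixed level of $\ch$ lie in $\dtime(n^{o(\log\log n)})$, and the scaled version of the collapse places the correspondingly padded counting hierarchy inside $\dtime(2^{(\log n)^{o(\log\log n)}})$.

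\textbf{The contradiction.} This scaled collapse is incompatible with the deterministic time hierarchy theorem: the padding used above is exactly an operation under which the counting hierarchy is closed, so a language manufactured by diagonalization to need strictly more time than the bound just derived would nonetheless have to land inside the smaller time class. (An alternative phrasing routes the final blow through Allender's theorem quoted above: pushing the bootstrapping so as to strip off \emph{all} of the depth layers turns the hypothetical circuits into constant-depth threshold circuits for $\per$ of subexponential size $2^{n^{o(1)}}$, which that theorem rules out.)

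\textbf{Where the difficulty lies.} The delicate part is the iteration in the first two steps. One round trades only a constant number of depth layers for a polynomial -- or, when scaled, quasipolynomial -- blow-up of the instance, so eliminating all $\Theta(d(n))$ layers forces roughly $\log\log n/d(n)\to\infty$ rounds, and it is precisely here that the hypothesis $d(n)=o(\log\log n)$, rather than merely $O(\log\log n)$, is used. Through all these rounds one must simultaneously keep the total depth $o(\log\log n)$, the total size subexponential (quasipolynomial), \emph{and} the family $\dlogtime$-uniform; verifying that $\dlogtime$-uniformity -- which is sensitive to the exact encoding and addressing of gates -- survives the iterated composition, and that the numerical parameters close up to yield the contradiction, is the technical heart of the argument.
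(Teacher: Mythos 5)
Your proposal breaks down at the evaluation step, and this is precisely where the paper's key idea lives. You evaluate a uniform threshold circuit of size $s$ and depth $\delta$ by brute-force recursion in time $s^{O(\delta)}$. This bound is \emph{at least} the size of the circuit, so the hypothesis never buys any compression of a deterministic time class into a smaller one: for the polynomial-size circuits you only recover $\pp\subseteq\p$ (true under the hypothesis, but not a contradiction), and for the padded/scaled circuits of size $2^{(\log n)^{O(1)}}$ (or $2^{O(n)}$) you only recover a quasipolynomial (or exponential) time simulation of classes that already sit inside that much deterministic time. The time hierarchy theorem separates deterministic classes from one another, so to contradict it you must place some deterministic time class strictly below its own time bound; nothing in your argument does this, and the claimed inclusion of the padded counting hierarchy in $\dtime(2^{(\log n)^{o(\log\log n)}})$ does not yield the contradiction you assert. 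Your fallback via Allender's constant-depth theorem fares no better: you would need to convert depth-$o(\log\log n)$ circuits into \emph{constant-depth} subexponential ones, and the ``bootstrapping'' you invoke (composing permanents inside permanents) composes reductions but provides no mechanism for reducing depth; the ``$\log\log n/d(n)$ rounds'' you gesture at in the last paragraph is exactly the step that is missing.

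The paper's route supplies the missing compression. From the hypothesis one first gets $\pp=\p$ (Lemma~\ref{lemma_per}), and crucially, via a fixed \pp-complete problem, a \emph{fixed} polynomial $p$ such that threshold counting over the inputs of a gate of a succinctly described circuit costs only $p(\cdot)$ of the time needed to evaluate one input (Lemma~\ref{lemma_pp}). Evaluating level by level then gives time $T(d{+}1)\leq T(2)^{\alpha^{d}}=(\log s)^{2^{O(d)}}$ for a succinctly given threshold circuit of size $s$ and depth $d$ (Lemma~\ref{lemma_main}) --- polylogarithmic in $s$ when $d$ is small, i.e.\ far below the circuit size, which is what brute force can never achieve. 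Scaling the \dlogtime-uniform \aczero\ reduction from \sharpp\ to the permanent up to \e\ gives every language in \e\ polynomial-time-constructible threshold circuits of size $2^{O(n)}$ and depth $o(\log n)$ (Corollary~\ref{cor_thresh_E}), whence $\e\subseteq\dtime(n^{2^{o(\log n)}})$, contradicting the time hierarchy theorem. Without an analogue of the $(\log s)^{2^{O(d)}}$ evaluation --- i.e.\ without using the counting collapse to handle each threshold gate by a polynomial-time count rather than by enumerating its inputs --- your argument cannot reach a contradiction.
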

It seems to be the first superpolynomial lower bound on the size of
\textit{non-constant-depth} threshold circuits for the permanent (though a
lower bound is proved in~\cite{RY2008} on \emph{multilinear} arithmetic
circuits of depth $o(\log n/\log \log n)$). Admittedly, the depth $o(\log\log
n)$ is still small but until now the known techniques were only able to prove
lower bounds on constant-depth circuits.

Let us very briefly describe our proof technique. In contrast
with~\cite{allender1999}, we do not use the relation between threshold
circuits and the counting hierarchy, which implied to consider only
constant-depth circuits. Also, the diagonalization in~\cite{allender1999} is a
variant on the nondeterministic time hierarchy theorem. Here, we use the usual
deterministic time hierarchy theorem as an indirect diagonalization : under
the assumption that the permanent has \dlogtime-uniform circuits of polynomial
size and depth $o(\log \log n)$, we show
\begin{enumerate}
\item the value of a threshold circuit of size $s$ and depth $d$ can be
  computed in time $(\log s)^{2^{O(d)}}$ (Lemma~\ref{lemma_main} combined with
  Lemma~\ref{lemma_per});
\item every language in \e\ has uniform threshold circuits of size
  $2^{O(n)}$ and depth $o(\log n)$ (Corollary~\ref{cor_thresh_E}).
\end{enumerate}
These two points together imply that every language in \e\ can be
computed in subexponential time, a contradiction with the time hierarchy
theorem.

Since threshold circuits can simulate arithmetic circuits, we also obtain a
superpolynomial lower bound on the size of uniform arithmetic circuits of
depth $o(\log\log n)$ for the permanent (Corollary~\ref{cor_arith}).

\textit{Organization of the paper --- } The next section is devoted to the
definition of the notions in use: circuits (Boolean, threshold, arithmetic),
uniformity and some complexity classes. Then Section~\ref{sec_technical} is
dedicated to the proof of Theorem~\ref{th_main} by showing a series of results
along the way suggested above.

\section{Preliminaries}

The notions we use are very standard but, for completeness, we still recall
them in this section.

\subsection{Boolean circuits}

A Boolean circuit on $n$ variables is a directed acyclic graph, whose vertices
are labeled either by a variable among $\{x_1,\dots,x_n\}$ or by an operation
among $\{\lor,\land,\lnot\}$. Vertices of indegree\footnote{Indegree and
  outdegree are also called fan-in and fan-out, respectively.} 0 are called
inputs, the others are called gates. A gate labeled by $\lnot$ is required to
have indegree 1, whereas gates labeled by $\lor$ or $\land$ have indegree 2. A
single gate has outdegree 0 and is called the output gate.

The value computed by a vertex is defined recursively: an input $x_i$ has for
value the value of the variable $x_i\in\{0,1\}$. A $\lnot$ gate $g=\lnot h$
has for value the negation of the value of $h$. An $\lor$ gate $g=h_1\lor h_2$
(respectively an $\land$ gate $g=h_1\land h_2$) has for value the disjunction
(resp. conjunction) of the values of $h_1$ and $h_2$. The value of the circuit
is by definition the value of its output gate.

The size of the circuit is the number of vertices and the depth is the length
of the longest path from an input vertex to the output gate.

Remark that in order to recognize a language, one needs not only one but a
whole family (that is, an infinite sequence) of circuits $(C_n)$, as explained
below. There is also a variant in which gates $\lor$ and $\land$ have
unbounded fan-in: this is useful when defining classes of circuits of
constant depth.

\subsection{Threshold circuits}

A threshold circuit has a similar definition as a Boolean circuit with $\lor$
and $\land$ gates of arbitrary fan-in, but another type of gates is allowed:
threshold gates (also known as majority gates). A threshold gate is also of
arbitrary fan-in, and its value is 1 if at least half of its inputs have value
1, and 0 otherwise.

Again, in order to recognize a language, a whole family of circuits is
needed. Remark that it makes sense to consider families of bounded depth
threshold circuits since gates are allowed to have arbitrary fan-in.

\subsection{Arithmetic circuits}

An arithmetic circuit is defined similarly as a Boolean circuit but with other
kinds of gates. It has $+$, $-$ and $\times$ gates, all of fan-in 2, and
besides variables, another input is labeled by the constant 1. The variables
are not considered to have Boolean values anymore, but instead they are
symbolic and the circuit computes a polynomial (over the ring $\zz$) in the
obvious way: the value of the input gate labeled by 1 is the constant
polynomial 1, the value of an input gate labeled by $x_i$ is the polynomial
$x_i$, the value of a $+$ gate (respectively $-$ gate, $\times$ gate) is the
sum (resp. difference, product) of the values of its inputs.

An arithmetic circuit $C$ with $n$ input gates computes a multivariate
polynomial over \zz\ with $n$ variables. Circuit families $(C_n)$ are used to
compute families of polynomial. The permanent family (also called permanent
for short) is the family $(\per_n)$ of polynomials defined as follows:
$$\per_n(x_{1,1},x_{1,2},\dots,x_{1,n},x_{2,1},\dots,x_{n,n})=
\sum_{\sigma}\prod_{i=1}^n x_{i,\sigma(i)}$$ where the sum is taken over all
the permutations $\sigma$ of $\{1,\dots,n\}$. The $n^2$ variables $x_{i,j}$
can be viewed as the coefficients of an $n\times n$ matrix, allowing us to
speak of the permanent of a matrix.

\subsection{Uniformity}

Circuits, be they Boolean, threshold or arithmetic, are finite objects easily
encoded in binary (e.g. by the list of their vertices and edges). Hence they
can be handled by Turing machines.

As already mentioned, we are interested in sequences $(C_n)$ of circuits in
order to recognize languages. In whole generality, no assumption is made on
the structure of these circuits: in particular, the Boolean encodings of the
circuits of a family may be uncomputable. However, if a single Turing machine
is able to produce the Boolean encoding of all the circuits of the family,
then we speak of uniformity. The degree of uniformity depends on the
ressources needed by the machine.

A family of circuits $(C_n)$ is said \p-uniform if there exists a
deterministic Turing machine which, on input $(n,i)$ given in binary, outputs
the $i$-th bit of the encoding of $C_n$ in time polynomial in $n$ (that is, in
time exponential in the size of the input). Similarly, a family of circuits
$(C_n)$ is said \dlogtime-uniform if there exists a deterministic Turing
machine which, on input $(n,i)$ given in binary, outputs the $i$-th bit of the
encoding of $C_n$ in time logarithmic in $n$ (that is, in time linear in the
size of the input). Of course, \dlogtime-uniformity implies \p-uniformity. It
can be argued that \dlogtime-uniformity is the right notion of uniformity for
small-depth circuits, see~\cite{BIS1990}.

\begin{remark}
  In the remainder of the paper, we shall work with \dlogtime-uniformity, but
  everything remains valid if replaced by ``polylogtime'' uniformity.
\end{remark}

\subsection{Complexity classes}

Finally, we will meet some complexity classes defined now. Let $\dtime(t(n))$
denote the set of languages recognized in time $t(n)$ by a deterministic
Turing machine. Then \p\ is the class $\dtime(n^{O(1)})=\cup_{k>0}\dtime(n^k)$
(that is, deterministic polynomial time) and \e\ is the class
$\dtime(2^{O(n)})=\cup_{k>0}\dtime(2^{k.n})$ (that is, deterministic
exponential time with linear exponent).

Recall the time hierarchy theorem~\cite{HS1965}: for time-constructible
functions $f$ and $g$, if $f(n)/g(n)=o(1/\log(g(n)))$ then
$\dtime(g(n))\not\subset\dtime(f(n))$. In particular, we will use the
following consequence: $\e\not\subset\dtime(n^{2^{o(\log n)}})$.

The class \sharpp\ is the set of functions $f:\{0,1\}^*\to\nn$ defined as
follows: there exist a polynomial $p$ and a language $A\in\p$ such that
$f(x)=\#\{y\in\{0,1\}^{p(|x|)}:(x,y)\in A\}$. Computing the permanent of a 0-1
matrix is \sharpp-complete (Valiant~\cite{valiant1979}). Then \pp\ is the set
of languages $B$ such that there is $f\in\sharpp$ satisfying $[x\in B\iff
f(x)\geq 2^{p(|x|)-1}]$. The class \pp\ can also be viewed as the languages
$B$ such that there exists a polynomial-time nondeterministic Turing machine
$N$ satisfying [$x\in B$ iff at least half of the computation paths of $N$ are
accepting]. Remark that if every function in \sharpp\ can be computed in
polynomial time, then $\pp=\p$.

Complexity classes can also be defined in terms of circuits (either Boolean or
threshold). An input $x$ is accepted by a circuit $C$ if the value of $C$ on
$x$, denoted by $C(x)$, is 1. In order to recognize languages, families
$(C_n)$ of circuits are considered: circuit $C_n$ will recognize inputs of
size $n$, hence we make the assumption that $C_n$ has $n$ input gates. Now, a
language $A$ is recognized by a family $(C_n)$ of circuits if
$A=\{x\in\{0,1\}^*:C_{|x|}(x)=1\}$.

We shall use the well known characterization of \p\ in terms of circuits: \p\
is the set of languages recognized by \p-uniform families of polynomial-size
Boolean circuits. The class \aczero\ is the set of languages recognized by a
family of constant-depth Boolean circuits of polynomial size, where the gates
$\lor$ and $\land$ have unbounded fan-in. The class \tczero\ is the set of
languages recognized by a family of constant-depth threshold circuits of
polynomial size. Uniform versions of these classes, \dlogtime-\aczero\ and
\dlogtime-\tczero\ respectively, are defined by requiring \dlogtime-uniformity
on the circuit family.

\section{Technical developments}\label{sec_technical}

\noindent This series of results is devoted to the proof of
Theorem~\ref{th_main}. % The first two lemmas are only useful for the proof
% of Corollary~\ref{cor_pp}.

% \begin{lemma}
%   Suppose the permanent has \dlogtime-uniform polynomial-size threshold
%   circuits. Then it has \dlogtime-uniform polynomial-size Boolean circuits.
% \end{lemma}
% \begin{proof}
%   Replace every threshold gate with $N=n^{O(1)}$ inputs by the addition of the
%   inputs followed by a comparison of the result with $N/2$. This iterative
%   addition can be carried out by a \dlogtime-uniform circuit of size
%   polynomial in $N$, hence polynomial in $n$.\qed
% \end{proof}

% \begin{lemma}
%   Suppose the permanent has \p-uniform polynomial-size Boolean
%   circuits. Then $\pp=\p$.
% \end{lemma}
% \begin{proof}
%   Under the hypothesis, by \sharpp-completeness of the permanent every
%   function in \sharpp\ can be computed in polynomial time. This implies that
%   $\pp=\p$.\qed
% \end{proof}
% The two previous lemmas have the following corollary.
\begin{lemma}\label{lemma_per}
  If the permanent has \p-uniform polynomial-size threshold circuits
  then \pp=\p.
\end{lemma}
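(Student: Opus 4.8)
The plan is to show that if the permanent has $\p$-uniform polynomial-size threshold circuits, then every function in $\sharpp$ is computable in polynomial time; the conclusion $\pp=\p$ then follows from the remark recorded in the Preliminaries. The starting point is Valiant's result that computing the permanent of a $0$-$1$ matrix is $\sharpp$-complete under polynomial-time many-one reductions. So, given $f \in \sharpp$, I would first reduce the computation of $f(x)$ to the computation of $\per_N(M)$ for an explicitly-constructed $0$-$1$ matrix $M$ of polynomial dimension $N = N(|x|)$, the reduction running in deterministic polynomial time.

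Next I would invoke the hypothesis: there is a $\p$-uniform family $(C_n)$ of polynomial-size threshold circuits computing the permanent. The subtlety here is that the permanent is an \emph{arithmetic} object (a polynomial over $\zz$, with potentially large integer values), whereas threshold circuits are Boolean. So "threshold circuits for the permanent" must mean circuits computing the permanent bit by bit: for inputs that are $n\times n$ matrices with entries given in binary, the circuit family outputs the binary representation of the integer $\per_n(M)$ (one bit per output gate, or one circuit per output bit). For a $0$-$1$ matrix of size $N$, $\per_N(M) \le N!$, so its binary representation has only $O(N\log N)$ bits, all produced by the circuits $C_{N'}$ for the appropriate input length $N' = \mathrm{poly}(N)$.

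The final step is to evaluate these circuits in polynomial time. Since the family is $\p$-uniform, a deterministic Turing machine running in time polynomial in $N$ can write down the entire circuit $C_{N'}$ (it has polynomial size, and each bit of its encoding is computable in time $\mathrm{poly}(N')=\mathrm{poly}(N)$). Once the circuit is written out explicitly, evaluating a threshold circuit on a given input is easy: process the gates in topological order, and at each gate of fan-in $k$ simply count how many of its $k$ input wires carry a $1$ and compare with $k/2$ — all doable in time polynomial in the circuit size, hence polynomial in $N$. Reading off the output bits yields $\per_N(M)$, hence $f(x)$, in time polynomial in $|x|$.

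I do not expect a serious obstacle: the only thing to be careful about is bookkeeping the input length, namely that the matrix $M$ produced by Valiant's reduction (with binary-encoded entries, here just bits) has encoding length $n = \mathrm{poly}(|x|)$, so that we use the circuit $C_n$ from the family, and that the number of output bits $O(N\log N)$ is handled correctly. One should also note that $\p$-uniformity (rather than the stronger $\dlogtime$-uniformity) suffices here precisely because we are already spending polynomial time, so there is ample time to generate the circuit; this is why the weaker hypothesis appears in the lemma. Everything else is routine.
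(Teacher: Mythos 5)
Your proposal is correct and follows essentially the same route as the paper: use the hypothesis to show the permanent is computable in deterministic polynomial time, then invoke \sharpp-completeness of the permanent to get every \sharpp\ function in polynomial time, hence $\pp=\p$. The only cosmetic difference is that you evaluate the \p-uniform threshold circuit directly (write it out and process gates in topological order), whereas the paper first converts the threshold circuits into \p-uniform polynomial-size Boolean circuits and then appeals to the circuit characterization of \p; both steps are routine and interchangeable.
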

\begin{proof}
  First turn the threshold circuits into Boolean circuits. To this end, every
  $\land$ or $\lor$ gate of unbounded fan-in is replaced by trees of $\land$
  or $\lor$ gates of fan-in 2 (which clearly remains \p-uniform and of
  polynomial size), and every threshold gate with $N=n^{O(1)}$ inputs is
  replaced by the addition of the inputs followed by a comparison of the
  result with $N/2$. This iterative addition can easily be carried out by a
  \p-uniform circuit of size polynomial in $N$, hence polynomial in $n$. This
  proves that the permanent has \p-uniform polynomial-size Boolean circuits.

  Thus, by \sharpp-completeness of the permanent every function in \sharpp\
  can be computed in polynomial time. This implies that $\pp=\p$.\qed
\end{proof}
% \begin{lemma}
%   A threshold circuit $C$ of size $s$ and depth $d$ can be (uniformly) turned
%   into a threshold circuit $C'$ of size $O(s)$ and depth $O(d)$ in which all
%   threshold gates have the property that 
% \end{lemma}
As a preparation to the proof of Lemma~\ref{lemma_main}, let us first rephrase
the hypothesis $\pp=\p$ in a convenient way.
\begin{lemma}\label{lemma_pp}
  Let $A$ be a language with a (deterministic) algorithm running in time
  $t(n)\geq n$. Consider the following problem $B$: given a word $x$, a length
  $n$ and an integer $N\leq 2^n$, decide
  whether at least $N$ words $y$ of size $n$ satifsy $(x,y)\in A$.\\
  If $\pp=\p$ then $B$ has an algorithm running in time $p(t(n))$ for a fixed
  polynomial $p$ (independent of $A$).
\end{lemma}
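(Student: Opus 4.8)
The strategy is to express the counting problem $B$ as a \pp-style query and then appeal to the hypothesis $\pp=\p$. The subtlety is that the standard definition of \pp\ involves a fixed language in \p\ and a count compared to exactly half of the witness space $2^{p(|x|)}$, whereas here we are given an arbitrary language $A$ with a possibly super-polynomial running time $t(n)$, an arbitrary threshold $N\le 2^n$, and the witness length $n$ is supplied as part of the input rather than being determined by $|x|$. So the work is in massaging $B$ into the right shape, running \pp=\p\ on it, and tracking the running time.

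First I would pad. Given an instance $(x,n,N)$ of $B$, consider the language
\[
A' = \{\,(x,n,N,y) : |y| = n+1,\ \text{and either } y = 0z \text{ with } (x,z)\in A,\ \text{or } y=1z \text{ with } z < N\,\}.
\]
Here I read the last $n$ bits of $y$ as the word $z$. The number of length-$(n{+}1)$ strings $y$ with $(x,n,N,y)\in A'$ is exactly $k + N$, where $k$ is the number of length-$n$ words $z$ with $(x,z)\in A$. Hence ``at least $N$ words $y$ satisfy $(x,z)\in A$'' is equivalent to $k+N \ge 2N$, i.e. to at least $N$... more precisely I will instead arrange the second branch to contribute $2^n - N$ strings (take $y=1z$ with $z\ge N$), so that the total count is $k + 2^n - N$, and the instance is a YES instance iff $k \ge N$ iff $k + 2^n - N \ge 2^n = 2^{(n+1)-1}$, which is exactly the \pp\ acceptance condition with witness length $n+1$. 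The only remaining mismatch is that the \pp\ definition wants the witness length to be a polynomial in the input length and the verifier to run in polynomial time in the input length; but we can absorb this by further padding the input $(x,n,N)$ with $t(n)$ extra blank symbols, so that the input length becomes $m \ge t(n) \ge n$, the witness length $n+1 \le m$ is bounded by the input length, and checking $(x,z)\in A$ runs in time $t(n) \le m$, i.e. polynomial (indeed linear) in $m$. Thus the padded language lies in \p, and the padded problem is a genuine \pp\ problem.

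Now apply the hypothesis: \pp=\p, so this \pp\ language is decided by some deterministic algorithm running in time $q(m)$ for a fixed polynomial $q$ that does not depend on $A$ (it is the polynomial coming from the collapse, applied to the generic \pp-complete problem, or to the specific language just constructed — either way it is a universal polynomial once we fix the simulation of \pp\ by \p). On an original instance $(x,n,N)$ the padded length is $m = \Theta(|x| + t(n))= O(t(n))$ since $t(n)\ge n \ge |x|$ is the natural regime (and if $|x|$ is larger we may take $t(n)\ge|x|$ without loss, as $t$ bounds the running time of the $A$-algorithm on inputs involving $x$); doing the padding itself costs $O(t(n))$. Hence $B$ is decided in time $q(O(t(n))) = p(t(n))$ for a fixed polynomial $p$ independent of $A$, as claimed.

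The main obstacle — really the only point requiring care — is the bookkeeping that makes the polynomial $p$ \emph{independent of $A$}: one must make sure the collapse $\pp=\p$ is used as a single universal statement (one fixed polynomial-time algorithm for a \pp-complete problem, or equivalently one fixed polynomial overhead for simulating any \pp\ machine) rather than invoked afresh with an $A$-dependent polynomial, and that all the $A$-dependence is quarantined inside the argument $t(n)$ fed to $p$. Everything else — the encoding of $(x,n,N,y)$, splitting on the first bit of $y$, the comparison $z \ge N$, and the padding to length $t(n)$ — is routine and clearly doable within the stated time bound.
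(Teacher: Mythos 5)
Your proof is correct and follows essentially the same route as the paper: shift the count by adding $2^n-N$ dummy accepting witnesses so the threshold becomes exactly half, pad the input to length $\Theta(t(n))$ so the verifier is polynomial-time, and obtain the $A$-independent polynomial by invoking $\pp=\p$ through a fixed \pp-complete problem (the paper does this explicitly via the canonical complete language $H$ and a reduction). The only slip is the parenthetical claim that applying the collapse ``to the specific language just constructed'' would also yield a universal polynomial --- it would not, since that language depends on $A$ --- but your final paragraph already corrects this by insisting on the single fixed algorithm for the complete problem, which is exactly the paper's argument.
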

\begin{proof}
  Remark that this is not a completely obvious consequence of $\pp=\p$ since
  the polynomial $p$ is required to be independent of $A$. In fact this comes
  from the existence of a complete problem for \pp. Take indeed the canonical
  \pp-complete language $H=\{(M,x,1^n):\mbox{at least half of the computation
    paths of $M(x)$ are accepting in time $n$}\}$, where $M$ is a
  nondeterministic Turing machine. The hypothesis $\pp=\p$ implies that $H$ is
  decidable in time $p(n)$ for some polynomial $p$.

  To the problem $B$ is associated a language
  $\tilde{B}=\{(x,n,N,1^{t(n)}):\#\{y\in\{0,1\}^n:(x,y)\in A\}\geq N\}$. Then
  $\tilde{B}$ is in \pp\ and a reduction from $\tilde{B}$ to $H$ is the
  mapping $(x,n,N,1^{t(n)})\mapsto (M,(n,N,x),1^{t(n)})$, where $M(n,N,x)$ has
  the following behaviour: it guesses a bit $b\in\{0,1\}$; if $b=0$ then it
  creates $2^n-N$ accepting paths among $2^n$ paths; if $b=1$ then it guesses
  $y\in\{0,1\}^n$ and decides whether $(x,y)\in A$ by running the algorithm
  for $A$ in time $t(n)$. Therefore $M(n,N,x)$ runs in time $O(t(n))$, has
  $2^{n+1}$ paths, and among them $\#\{y\in\{0,1\}^n:(x,y)\in A\}+(2^n-N)$ are
  accepting. This is at least half iff $\#\{y\in\{0,1\}^n:(x,y)\in A\}\geq
  N$. This reduction shows that $\tilde{B}$ is decidable in time
  $p(t(n))$.\qed
\end{proof}
Similarly as succinct representations used for exponential-time-complete
languages, threshold circuits can be succinctly given, not by their binary
encoding but rather by a description of their gates. That is, instead of
giving the threshold circuit $C$ directly, a Boolean circuit $B$ is given,
whose value $B(i)$ on input $i$ is the $i$-th bit of the encoding of $C$. This
may enable to give a much shorter representation of the circuit. Circuits
given in that way will be called ``succinctly given''.
\begin{lemma}\label{lemma_main}
  Let $A$ be the problem of deciding the value of a succinctly given
  threshold circuit, that is,
  $$A=\{(B,x): B\mbox{ represents a threshold circuit $C$ and $C(x)=1$}\}$$
  where $B$ is a Boolean circuit and $x$ is a Boolean input to $C$ of
  appropriate size. The size of the threshold circuit $C$ is
  denoted by $s$ and its depth by $d$. Suppose furthermore that the size of
  the input $(B,x)$ is less than $(\log s)^{2^d}$.\\
  If $\pp=\p$, then $A$ has an algorithm of running time $(\log
  s)^{2^{O(d)}}$.
\end{lemma}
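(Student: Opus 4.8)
The plan is to evaluate the threshold circuit $C$ layer by layer, starting from the inputs and working towards the output gate, so that after processing layer $\ell$ we know the value of every gate at depth $\le \ell$. The point is that computing the value of a single threshold (or unbounded fan-in $\land$/$\lor$) gate, given the values of all its inputs, amounts to \emph{counting} how many of its predecessor gates evaluate to $1$ and comparing the count with a threshold. This is precisely the kind of counting problem that Lemma~\ref{lemma_pp} handles efficiently under the hypothesis $\pp=\p$. So I would set up, at each layer, an auxiliary language $A_\ell$ that captures ``gate $g$ at depth $\le\ell$ evaluates to $1$'', and observe that deciding membership in $A_\ell$ reduces to counting accepting witnesses for a predicate that itself calls the (already computed) decision procedure for $A_{\ell-1}$; Lemma~\ref{lemma_pp} then gives a procedure for $A_\ell$ whose running time is a \emph{fixed} polynomial in the running time of the procedure for $A_{\ell-1}$ — crucially, the polynomial $p$ in Lemma~\ref{lemma_pp} does not depend on the level.

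More concretely: let $T_\ell$ denote the running time of the procedure deciding $A_\ell$. At level $0$ a gate is an input or a constant, so $T_0$ is polynomial in $n := |(B,x)| < (\log s)^{2^d}$, i.e. $T_0 = (\log s)^{2^{O(d)}}$ already (or one can start from $T_0 = \mathrm{poly}(n)$; the final bound is the same). To evaluate a gate $g$ at level $\ell$ we must: (i) recover the list of $g$'s predecessors by evaluating the Boolean description circuit $B$ on the relevant indices — this costs $\mathrm{poly}(|B|) \le \mathrm{poly}(n)$ per query; (ii) count how many predecessors $h$ satisfy $h\in A_{\ell-1}$; (iii) compare with $g$'s threshold. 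Step (ii) is an instance of the problem $B$ of Lemma~\ref{lemma_pp} for a language $A'$ whose decision algorithm runs in time $t(n) = \mathrm{poly}(n)\cdot T_{\ell-1}$ (reading the circuit description to identify the $h$-th predecessor of $g$, then running the $A_{\ell-1}$ procedure). Lemma~\ref{lemma_pp} gives $T_\ell \le p(\mathrm{poly}(n)\cdot T_{\ell-1})$ for the fixed polynomial $p$. Unrolling this recurrence $d$ times gives $T_d \le q(n)^{c^d}$ for constants $c, q$, and since $n < (\log s)^{2^d}$ we get $T_d = (\log s)^{2^d \cdot 2^{O(d)}} = (\log s)^{2^{O(d)}}$, as claimed. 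Finally $A$ itself is $A_d$ evaluated at the output gate, so it is decided within this bound.

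The main obstacle, and the reason the hypothesis on $|(B,x)|$ is stated the way it is, is bookkeeping the growth of parameters so that the doubly-exponential-in-$d$ blow-up stays absorbed in the $(\log s)^{2^{O(d)}}$ bound rather than escaping it. One must check that the ``$n$'' fed into Lemma~\ref{lemma_pp} at each level is still bounded by a quantity of the form $(\log s)^{2^{O(d)}}$ — the assumption $|(B,x)| < (\log s)^{2^d}$ guarantees the base case, and since $p$ and the per-query polynomial overhead are \emph{level-independent}, each application of the recurrence only multiplies the exponent of $2$ by a bounded constant, giving $2^{O(d)}$ after $d$ levels. A secondary subtlety is verifying that the indices one must feed to $B$ to enumerate a gate's predecessors, and to read off its threshold value, have length polynomial in $\log s$ (they are essentially gate numbers, and $C$ has $s$ gates), so that all the auxiliary computations genuinely run in time polynomial in $\mathrm{poly}(n) + \log s$. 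Neither of these is deep, but getting the quantifier order right — ``fixed polynomial $p$ independent of $A$'' — is exactly what makes the induction on $d$ go through, and it is worth being explicit about it.
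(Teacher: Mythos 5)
Your proposal is correct and follows essentially the same route as the paper's proof: a layer-by-layer evaluation via auxiliary languages $A_\ell$, with Lemma~\ref{lemma_pp} (and the level-independence of its polynomial $p$) handling the threshold gates, and the recurrence $T_\ell \le p(T_{\ell-1})$ unrolled over the $d$ levels against the base case supplied by the hypothesis $|(B,x)| < (\log s)^{2^d}$. The only differences are cosmetic (the paper indexes its languages by pairs ``gate $i$ at depth $\le k$, input $j$ of $i$'' and runs $k$ from $2$ to $d+1$), so no further comment is needed.
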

\begin{proof}
  The idea is to recursively evaluate the values of the gates at each depth of
  the circuit, using Lemma~\ref{lemma_pp} for threshold gates. In order to
  apply Lemma~\ref{lemma_pp}, one has to consider all the inputs of a
  particular gate, leading us to define the language $A_k$ corresponding to
  the gates being inputs of the $i$-th gate of $C$, whose depth is $\leq k$,
  as follows:
%   $$\begin{array}{ll}
%     A_k=\{(B,x,i,j): & B\mbox{ represents a threshold circuit $C$ whose $i$-th
%       gate is}\\
%     & \mbox{at depth $\leq k$, one of its inputs being gate number $j$}\\
%     & \mbox{whose value is 1 in the computation of $C$ on input $x$}\}
%   \end{array}$$
  $$\begin{array}{ll}
    A_k=\{(B,x,i,j): & B\mbox{ represents a threshold circuit $C$ in which}\\
    & \mbox{gate number $i$ is at depth $\leq k$,}\\
    & \mbox{gate number $j$ is an input of gate $i$, and}\\
    & \mbox{the value of gate $j$ in the computation $C(x)$
      is 1}\}
  \end{array}$$
  Note that one can artificially add to $C$ a final ``identity gate'' taking as
  input the output of $C$, in which case deciding $A_{d+1}$ implies computing
  the value of $C(x)$.

  Let us call $T(k,d,s)$ the time needed to decide $A_k$ as a function of the
  size $s$ and the depth $d$ of $C$. The language $A_2$ merely consists in
  evaluating an input gate, that is, deciding to which bit of $x$ it
  corresponds: this can be done in polynomial time, hence in time $(\log
  s)^{O(2^d)}$ by assumption on the size of $(B,x)$. Therefore
  $T(2,d,s)=(\log s)^{O(2^d)}$.

%   The idea is to recursively evaluate the values of the gates of the circuit,
%   using Lemma~\ref{lemma_pp} for threshold gates. For this purpose, let us
%   define the language $A_d$ corresponding to the gates of $C$ at depth $\leq
%   j$ as follows:
%   $$\begin{array}{ll}
%     A_j=\{(B,x,i): & B\mbox{ represents a threshold circuit $C$ whose $i$-th
%       gate}\\
%     & \mbox{is at depth $\leq j$ and has value 1 in}\\
%     & \mbox{the computation of $C$ on input $x$}\}
%   \end{array}$$
%   Let us call $T(j,d,s)$ the time needed to decide $A_j$ as a function of the
%   size $s$ and the depth $d$ of $C$. The language $A_1$ merely consists in
%   evaluating an input gate, that is, deciding to which bit of $x$ it
%   corresponds: this can be done in linear time, hence in time $O((\log
%   s)^{2^d})$ by assumption on the size of $(B,x,i)$. Therefore $T(1,d,s)=O((\log
%   s)^{2^d})$.

  The purpose is now to decide $A_{k+1}$ by using the algorithm for $A_k$. It
  can be done easily since we can decide the value of a gate at depth $k$ if
  we know the values of the gates at depth $\leq k-1$. Indeed, let us decide
  whether $(B,x,i,j)\in A_{k+1}$, supposing gate $i$ is at depth $k+1$ and has
  gate $j$ as input: we want to compute the value of gate $j$. Since gate $j$
  is at depth $\leq k$, the algorithm for $A_k$ provides the value of all the
  inputs of gate $j$, which are used in turn to compute the value of gate $j$
  itself. More precisely we proceed inductively:
  \begin{itemize}
  \item If gate $j$ is a $\lnot$ gate, that is, $f=\lnot g$, then the value of
    $f$ is the negation of the value of $g$.
  \item If gate $j$ is an $\lor$ or an $\land$ gate, that is, $f=g\diamond h$
    with $\diamond\in\{\lor,\land\}$, then we perform the corresponding
    Boolean operation on the values of $g$ and $h$.
  \item Finally, if gate $j$ is a threshold gate, it has at most $s-1$ inputs
    and we decide whether at least half of them evaluate to 1.
  \end{itemize}
  Let us bound the execution time $T(k+1,s,d)$ of this algorithm for $A_{k+1}$
  as a function of $T(k,s,d)$ (the execution time of the algorithm for
  $A_k$). In the first case, we take the negation of one request of the form
  $(B,x,j,g)\in A_k$, therefore we have the following relation:
  $T(k+1,s,d)=T(k,s,d)+O(1)$. Similarly, in the second case we make a Boolean
  combination of two requests (one for each input), hence $T(k+1,s,d)\leq
  2T(k,s,d)+O(1)$. Finally in the third case, the task is to decide whether
  more than half of the inputs $y$ of gate $j$ evaluate to 1. Applying
  Lemma~\ref{lemma_pp} to the language $A_k$ with requests of the form
  $(B,x,j,y)\in A_k$ for all gates $y$ input of $j$, yields $T(k+1,s,d)\leq
  p(T(k,s,d))$ for some fixed polynomial $p$.

  As a whole, we have the following relation, for a fixed polynomial $p$:
  $$T(k+1,s,d)\leq p(T(k,s,d)).$$ In other words, there exists an exponent
  $\alpha\in\nn$ such that $T(k+1,s,d)\leq T(k,s,d)^{\alpha}$, hence
  $T(k,s,d)\leq T(2,s,d)^{\alpha^k}$. Since $T(2,s,d)=(\log s)^{O(2^d)}$ and
  deciding $A$ requires to go up to $k=d+1$, there is an algorithm for $A$
  running in time
  $$T(d+1,s,d)=(\log s)^{2^{O(d)}}.$$\qed
\end{proof}
% Under the additional hypothesis that the circuit $C$ in the statement of the
% preceding lemma is efficiently constructible, the following corollary is
% obtained.
Lemma~\ref{lemma_main} concerns the evaluation of succinctly given threshold
circuits; the consequence for languages is the following.
\begin{corollary}\label{cor_algo}
  Suppose a language $A$ has threshold circuits of size $s(n)$, depth
  $d(n)$ and constructible in polynomial time (that is, the $i$-th bit of
  $C_n$ is computable in time $n^{O(1)}$). Suppose furthermore that $(\log
  s(n))^{2^{O(d(n))}}$ is superpolynomial in $n$.\\
  If $\pp=\p$, then $A$ has an algorithm of running time $(\log
  s(n))^{2^{O(d(n))}}$.
\end{corollary}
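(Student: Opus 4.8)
The plan is to reduce this to Lemma~\ref{lemma_main} by exhibiting, for each input length $n$, a small Boolean circuit $B_n$ that succinctly represents the threshold circuit $C_n$ computing $A$ on inputs of length $n$. Concretely, I would proceed as follows. First, since the $i$-th bit of the encoding of $C_n$ is computable in time $n^{O(1)}$, the standard ``circuits capture $\mathsf{P}$'' construction gives a $\mathsf{P}$-uniform family of Boolean circuits of size $n^{O(1)}$ that computes this bit; let $B_n$ be such a circuit (it takes the index $i$ as input, with $n$ hard-wired). Thus $B_n$ represents $C_n$ in the sense of Lemma~\ref{lemma_main}, and the size of the pair $(B_n, x)$ for $|x| = n$ is $n^{O(1)}$.

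The second step is to check that this pair satisfies the size hypothesis of Lemma~\ref{lemma_main}, namely that $|(B_n,x)|$ is less than $(\log s(n))^{2^{d(n)}}$. This follows from the assumption that $(\log s(n))^{2^{O(d(n))}}$ is superpolynomial in $n$: the quantity $(\log s(n))^{2^{d(n)}}$ eventually dominates every fixed polynomial in $n$, in particular the polynomial bounding $|(B_n,x)|$. (One should be slightly careful here: ``$(\log s(n))^{2^{O(d(n))}}$ superpolynomial'' should be read as: for some constant in the $O(\cdot)$ the expression is superpolynomial, and then it is also superpolynomial for the constant $1$, i.e. $(\log s(n))^{2^{d(n)}}$ is already superpolynomial — or, if one prefers, one absorbs the constant into $d$ up front. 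This bookkeeping is the only mildly delicate point.)

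The third step is to run the algorithm of Lemma~\ref{lemma_main}. Under the hypothesis $\pp=\p$, that lemma gives an algorithm deciding ``$(B_n,x) \in A$'' — i.e. deciding whether $C_n(x) = 1$, i.e. whether $x \in A$ — in time $(\log s(n))^{2^{O(d(n))}}$, where here $s = s(n)$ and $d = d(n)$ are the size and depth of $C_n$. Wrapping this in the outer algorithm that, on input $x$, computes $n = |x|$, constructs the (uniform) description $B_n$ on the fly as needed, and calls the Lemma~\ref{lemma_main} procedure, we obtain an algorithm for $A$ with running time $(\log s(n))^{2^{O(d(n))}}$, as claimed. The only obstacle worth flagging is the constant-juggling in the exponent of $2^{O(d(n))}$ between the hypothesis, the size bound, and the conclusion of Lemma~\ref{lemma_main}; since all three tolerate an arbitrary constant inside the $O(\cdot)$, these absorb harmlessly, but it is the step that needs to be written out carefully. \qed
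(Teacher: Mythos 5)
Your proposal is correct and follows exactly the route the paper intends (the paper states Corollary~\ref{cor_algo} without proof as an immediate consequence of Lemma~\ref{lemma_main}): turn the polynomial-time constructibility of $C_n$ into a polynomial-size succinct Boolean description $B_n$, use the superpolynomiality hypothesis to meet the size condition $|(B_n,x)| < (\log s)^{2^{d}}$ (up to the constant in the exponent, which you rightly note can be absorbed), and invoke the lemma under $\pp=\p$. The constant-bookkeeping you flag is indeed the only delicate point, and your handling of it is fine.
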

Let us now see how to relate the hypothesis on the permanent to decision
languages. We need the following result concerning the completeness of the
permanent under a very strong notion of reduction. This result appears
in~\cite{AG1994} as a careful analysis of the usual reduction of
Valiant~\cite{valiant1979b} (see also~\cite{zanko1991} for many-one
reductions), which can in fact be carried out in a much more efficient way
than just polynomial time.
\begin{proposition}
  The permanent of 0-1 matrices is hard for \sharpp\ under \dlogtime-uniform
  \aczero\ many-one reductions, that is, the reduction is computed by
  \dlogtime-uniform \aczero\ circuits.
\end{proposition}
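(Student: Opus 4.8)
The plan is to take the standard chain of reductions showing $\sharpp$-hardness of the permanent of $0$-$1$ matrices --- namely, Valiant's reduction from $\sharpsat$ (or an equivalent canonical $\sharpp$-complete counting problem) to the permanent --- and verify that each gadget-construction step in that chain can be carried out by a $\dlogtime$-uniform $\aczero$ circuit rather than merely in polynomial time. First I would fix a canonical $\sharpp$-complete problem, say counting accepting paths of a fixed nondeterministic polynomial-time machine, or equivalently $\sharpsat$ for a $3$-CNF formula; the reduction to the permanent proceeds by building, from the formula (or circuit/computation tableau), a weighted directed graph whose adjacency matrix has permanent equal to the desired count (up to a power of $2$ or similar easily-removable factor), and then simulating the small constant weights by local $0$-$1$ gadgets. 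The key observation is that this graph is \emph{extremely local}: each variable, clause, and connection contributes a fixed-size gadget, and which vertices are adjacent is determined by reading $O(1)$ bits of the input formula together with simple arithmetic (addition, comparison, indexing) on the vertex indices.

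The main steps, in order: (i) recall the explicit Valiant reduction and write down, for the target matrix $M$ of dimension $\poly(n)$, a formula for the entry $M[u,v]$ as a Boolean combination of ``$u$ and $v$ lie in gadgets $G_a$, $G_b$ that are connected in pattern $\pi$'' predicates; (ii) observe that decoding a vertex index $u$ into (which gadget, which internal node) is just bit-extraction and fixed-width integer arithmetic on $u$, hence in $\dlogtime$-uniform $\aczero$ --- here one leans on the Barrington--Immerman--Straubing characterization cited as \cite{BIS1990}, by which $\dlogtime$-uniform $\aczero$ equals $\dlogtime$-time-hierarchy-definable, so any such ``read $O(1)$ bits and do $O(1)$ arithmetic'' decision is in the class; (iii) handle the replacement of constant integer weights by $0$-$1$ gadgets, which only blows the matrix up by a constant factor in each dimension and is again index-arithmetic; (iv) assemble: the output matrix entry function, as a function of (input, row index, column index), is computed by a single $\dlogtime$-uniform $\aczero$ family, which is exactly the statement that the many-one reduction is in $\dlogtime$-uniform $\aczero$. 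Throughout, one must be slightly careful that the output length $\poly(n)$ and the index arithmetic stay within the $\dlogtime$ time bound, i.e. that index bit-lengths are $O(\log n)$, which holds since the matrix has polynomial dimension.

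The step I expect to be the main obstacle is not any single gadget but the \emph{bookkeeping of uniformity at the bit level}: verifying that the padding, the base-case encodings, and especially the arithmetic needed to locate a gadget and its internal wiring (iterated addition of several $O(\log n)$-bit quantities, comparisons, divisions by constants) all live in $\aczero$ and are $\dlogtime$-uniform. Iterated addition of a \emph{bounded} number of numbers is in $\aczero$, so this is fine, but one has to present the index arithmetic so that only $O(1)$ additions ever occur; the natural layout of Valiant's graph makes this possible because gadget boundaries fall at fixed strides. Once that is set up, the proposition follows essentially by inspection of the classical construction --- as the cited analysis in \cite{AG1994} of the reduction of \cite{valiant1979b} already indicates --- so I would present it as: ``the reduction of \cite{valiant1979b}, laid out with gadgets at fixed strides, has an entry function computable by reading $O(1)$ input bits and performing $O(1)$ bounded-width integer operations, hence is in $\dlogtime$-uniform $\aczero$,'' and relegate the routine gadget arithmetic to a remark.
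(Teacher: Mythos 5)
The paper does not actually prove this proposition: it is imported from the literature, namely from the analysis in \cite{AG1994} of Valiant's reduction (with \cite{zanko1991} cited for the many-one form). Your strategy --- re-examine the classical chain from a \sharpp\ function via a parsimonious tableau/3-CNF encoding to Valiant's graph, and argue that each matrix entry is determined by $O(1)$ input bits together with $O(\log n)$-bit index arithmetic, hence computable by \dlogtime-uniform \aczero\ circuits --- is exactly the route that cited analysis takes, so at the level of approach you are aligned with what the paper relies on. The uniformity bookkeeping you flag as the main obstacle (decoding a vertex index into gadget plus internal node, fixed strides, bounded iterated addition) is indeed the heart of the verification and is handled correctly in spirit.

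There is, however, a concrete inaccuracy in your step (iii). Valiant's gadgets produce a matrix with entries outside $\{0,1\}$, including \emph{negative} weights, and eliminating these is not a ``constant factor blowup by local $0$-$1$ gadgets.'' Positive constant weights can be simulated by constant-size path gadgets, but the $-1$ entries are classically removed by working modulo $2^m+1$ with $m=\poly(n)$, replacing $-1$ by a weight realizing $2^m$ (a chain of about $m$ weight-$2$ edges, each then expanded to $0$-$1$), so the blowup per such edge is polynomial, and the permanent of the resulting $0$-$1$ matrix is only \emph{congruent} to the desired count, which must then be recovered. Turning this into a genuine \emph{many-one} reduction (value mapped to value, rather than a metric/Turing reduction with post-processing) is precisely the delicate point addressed by \cite{zanko1991}, and it is also why the paper's corollaries carry an explicit post-processing function $\alpha$. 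None of this threatens \dlogtime-uniform \aczero\ computability --- the chains are perfectly regular and the required index arithmetic is still additions and comparisons of $O(\log n)$-bit numbers --- but as written your step (iii) does not yield the stated proposition; it needs the modular/chain construction (or Zanko's gadgets) spelled out, and likewise the parsimonious Cook--Levin step from an arbitrary \sharpp\ function to the formula, which you pass over with ``equivalently,'' must itself be checked to be \dlogtime-uniform \aczero\ (true, but part of the claim).
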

\begin{corollary}
  Every language $A\in\p$ can be expressed as the permanent of a 0-1 matrix
  $M$ of size $n^{O(1)}$, computed by \dlogtime-uniform \aczero\
  circuits. More precisely, there are functions $M$ and $\alpha$ computed by
  \dlogtime-uniform \aczero\ circuits such that $x\in
  A\Rightarrow\alpha(\per(M(x)))=1$ and $x\notin
  A\Rightarrow\alpha(\per(M(x)))=0$.
\end{corollary}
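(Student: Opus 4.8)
The plan is to specialise the preceding Proposition to the particular $\sharpp$ function that recognises $A$, so that the claimed $M$ and $\alpha$ are simply the pre- and post-processing parts of the guaranteed $\aczero$ reduction. First I would observe that for $A\in\p$ the characteristic function $\mathbf{1}_A$ lies in $\sharpp$: take the polynomial $p(n)=1$ and the language $B=\{(x,y):x\in A\text{ and }y=0\}$, which is in $\p$; then $\#\{y\in\{0,1\}^{p(|x|)}:(x,y)\in B\}$ equals $1$ if $x\in A$ and $0$ otherwise, i.e. it equals $\mathbf{1}_A(x)$. Hence $\mathbf{1}_A\in\sharpp$.

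Next I would invoke the Proposition with $f=\mathbf{1}_A$. Since the permanent of $0$-$1$ matrices is $\sharpp$-hard under \dlogtime-uniform $\aczero$ many-one reductions, there are a function $M$ sending $x$ to a $0$-$1$ matrix and a ``read-off'' function $\alpha$, both computed by \dlogtime-uniform $\aczero$ circuits, with $\mathbf{1}_A(x)=\alpha(\per(M(x)))$ for all $x$. Because an $\aczero$ circuit family has size $n^{O(1)}$, the matrix $M(x)$ it outputs has $n^{O(1)}$ entries, i.e. size $n^{O(1)}$ as required (and if one prefers the matrix size to be a fixed polynomial in $n$, one pads with an identity block, which multiplies the permanent by $1$ and is still $\aczero$). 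Finally, since $\mathbf{1}_A$ is $\{0,1\}$-valued, the identity $\mathbf{1}_A(x)=\alpha(\per(M(x)))$ directly yields the two implications claimed: $x\in A\Rightarrow\alpha(\per(M(x)))=\mathbf{1}_A(x)=1$ and $x\notin A\Rightarrow\alpha(\per(M(x)))=\mathbf{1}_A(x)=0$.

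I expect no genuine obstacle here: all the weight is carried by the Proposition, and the only things to check — that $\mathbf{1}_A\in\sharpp$ and that the reduction's post-processing $\alpha$ is of the stated $\aczero$ form — are routine. The single point worth a remark is the concrete shape of $\alpha$: if the careful analysis of Valiant's construction only yields $\per(M(x))=c(n)\cdot\mathbf{1}_A(x)$ for a positive integer $c(n)$ depending on $n$ alone (rather than an exactly parsimonious reduction), then one simply takes $\alpha(z)$ to be the \textsc{or} of all bits of the binary representation of $z$, i.e. the test ``$z\neq 0$''. This is a depth-one, hence $\aczero$, and trivially \dlogtime-uniform circuit, and $c(n)>0$ guarantees $\per(M(x))\neq 0\iff x\in A$. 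More generally, whatever benign offset the reduction introduces, comparing $\per(M(x))$ to the corresponding \dlogtime-computable constant remains $\aczero$, so an admissible $\alpha$ always exists.
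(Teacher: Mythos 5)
Your proposal is correct and matches the intended derivation: the paper states this corollary as a direct specialization of the preceding Proposition (with no separate proof), exactly as you do by putting the characteristic function of $A$ in \sharpp\ and letting the \dlogtime-uniform \aczero\ reduction supply $M$ while $\alpha$ absorbs the benign post-processing. Your remark that $\alpha$ can be taken as a nonzero test (or a comparison with a \dlogtime-computable constant), both of which are \dlogtime-uniform \aczero, correctly handles the only non-parsimonious feature of Valiant's reduction.
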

% \begin{proof}
%   This is because the permanent is hard for \p\ under \dlogtime-uniform
%   \aczero\ reducibilities (see [Zan91]).\qed
%   From a deterministic machine for $A$, it is easy to obtain a
%   \dlogtime-uniform formula $\phi(x)$ expressing that the machine accepts
%   input $x$. This gives an instance of \sharpsat\ which evaluates to 1 if
%   $x\in A$ and to 0 otherwise. 
%   The quantifier-free first-order formula expressing that a 
% the deterministic
%   machine for $A$ accepts 
% \end{proof}
Scaling up this result to exponential time yields the following corollary.
\begin{corollary}\label{cor_matrix}
  For every language $A\in\e$, there are two functions $M$ and $\alpha$
  computable by size $2^{O(n)}$, constant-depth Boolean circuits constructible
  in polynomial time (that is, the $i$-th bit of the circuit is computable in
  time $n^{O(1)}$), such that $x\in A\Rightarrow\alpha(\per(M(x)))=1$ and
  $x\notin A\Rightarrow\alpha(\per(M(x)))=0$.
%   Every language $A\in\e$ can be expressed as the permanent of a 0-1 matrix
%   $M$ of size $2^{O(n)}$, computed by size $2^{O(n)}$, constant-depth Boolean
%   circuits constructible in polynomial time (that is, the $i$-th bit of the
%   circuit is computable in time $n^{O(1)}$).
\end{corollary}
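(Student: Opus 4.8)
The plan is to scale up the previous corollary (the $\p$-case) from polynomial to exponential size by a standard padding argument. Let $A \in \e$, so $A$ is decided in time $2^{cn}$ for some constant $c$. Consider the padded language $A' = \{ x10^{2^{c|x|}-|x|-1} : x \in A\}$, so that $A'$ is decidable in time polynomial in the input length (since on input of length $N = 2^{c|x|}$ we need time $2^{c|x|} = N$ to simulate the $\e$-algorithm for $A$, after checking the padding is well-formed, which is easy). Hence $A' \in \p$, and the previous corollary applies to $A'$: there are functions $M'$ and $\alpha'$ computed by \dlogtime-uniform \aczero\ circuits such that membership of a padded string $w$ in $A'$ is decided by $\alpha'(\per(M'(w)))$.

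Next I would compose this with the (trivial, \aczero-computable) padding map $x \mapsto w = x10^{2^{c|x|}-|x|-1}$ to obtain, for the original language $A$, functions $M(x) = M'(w)$ and $\alpha = \alpha'$ with $x \in A \Rightarrow \alpha(\per(M(x))) = 1$ and $x \notin A \Rightarrow \alpha(\per(M(x))) = 0$. The key point to verify is the size and uniformity bookkeeping. The padded input $w$ has length $2^{O(n)}$ where $n=|x|$; the \aczero\ circuits for $M'$ and $\alpha'$ on inputs of length $N$ have size $N^{O(1)}$ and constant depth, hence on $w$ they have size $2^{O(n)}$ and constant depth. Composing with the padding produces Boolean circuits of size $2^{O(n)}$ and constant depth. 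As for constructibility: because $M'$ and $\alpha'$ were \dlogtime-uniform, the $i$-th bit of their description on inputs of length $N$ is computable in time $O(\log N)$; here $\log N = O(n)$, and after substituting the (explicit) padding the $i$-th bit of the composed circuit on inputs of length $n$ is computable in time $n^{O(1)}$ — in fact much faster, but polynomial time is all we need, which matches the statement.

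The main obstacle — really the only subtlety — is making sure the uniformity and the "computable in polynomial time" clause survive the exponential blow-up. A \dlogtime\ machine for the \aczero\ reduction runs in time logarithmic in the length of its input $w$; since $|w| = 2^{O(n)}$, this is time $O(n)$, well within $n^{O(1)}$, so describing the $i$-th bit (for $i$ written in binary, of length $O(n)$) of the size-$2^{O(n)}$ Boolean circuit takes polynomial time in $n$. One just has to be slightly careful that the indices $i$ one queries are themselves of polynomial bit-length (they are, since the circuit has $2^{O(n)}$ gates) and that the hard-wired padding zeros do not blow up the description — they are described by a trivial uniform rule. Once this routine bookkeeping is checked, the corollary follows.
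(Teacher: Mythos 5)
Your padding argument is exactly the ``scaling up to exponential time'' that the paper invokes (it gives no further proof), and your bookkeeping of size $2^{O(n)}$, constant depth, and polynomial-time constructibility from the \dlogtime-uniformity at input length $N=2^{O(n)}$ is correct. So the proposal is correct and takes essentially the same route as the paper.
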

This implies the following result.
\begin{corollary}\label{cor_thresh_E}
  If the permanent has \dlogtime-uniform polynomial-size threshold circuits of
  depth $d(n)$, then every language $A$ in \e\ has threshold circuits of size
  $2^{O(n)}$ and depth $O(d(2^{O(n)}))$, these circuits being constructible in
  polynomial time (that is, the $i$-th bit of $C_n$ is computable in time
  $n^{O(1)}$).
\end{corollary}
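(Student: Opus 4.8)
The plan is to build the circuits $C_n$ for $A$ by composing the three maps $M$, $\per$ and $\alpha$ supplied by Corollary~\ref{cor_matrix} with the hypothetical threshold circuits for the permanent. Fix $A\in\e$. By Corollary~\ref{cor_matrix} there are functions $M$ and $\alpha$ computed by constant-depth Boolean circuits of size $2^{O(n)}$, constructible in polynomial time, such that for every $x$ with $|x|=n$ one has $x\in A\iff\alpha(\per(M(x)))=1$ (the two implications together cover all cases). The matrix $M(x)$ has some dimension $N=N(n)=2^{O(n)}$, so the $M$-circuit on $n$ input bits produces the $N^2=2^{O(n)}$ entries of the $0$-$1$ matrix $M(x)$.

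Next I would bring in the hypothesis: since the permanent has \dlogtime-uniform threshold circuits of polynomial size and depth $d$, there is, for the dimension $N=N(n)$, a threshold circuit $P_N$ computing the binary representation of the permanent of an $N\times N$ $0$-$1$ matrix, of size $\poly(N)=2^{O(n)}$ and depth $d(N)=d(2^{O(n)})$, and whose $i$-th description bit is computable in time $O(\log N)=O(n)$. As $\per$ of such a matrix is at most $N!$, it has $O(N\log N)=2^{O(n)}$ bits, so $P_N$ has $2^{O(n)}$ output gates, which I would feed into the constant-depth Boolean circuit for $\alpha$.

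The circuit $C_n$ is then obtained by stacking these three pieces: the circuit for $M$ (depth $O(1)$), followed by $P_{N(n)}$ (depth $d(2^{O(n)})$), followed by the circuit for $\alpha$ (depth $O(1)$), identifying the output gates of each stage with the input gates of the next; since the constant-depth Boolean parts are in particular threshold circuits, $C_n$ is a threshold circuit. Its size is $2^{O(n)}+2^{O(n)}+2^{O(n)}=2^{O(n)}$, its depth is $d(2^{O(n)})+O(1)=O(d(2^{O(n)}))$, and $C_n(x)=1$ iff $x\in A$ for $|x|=n$. For polynomial-time constructibility, note that an encoding of $C_n$ can be taken to consist of the encodings of the three sub-circuits laid out consecutively, with gate numbers of the later stages shifted by fixed offsets and with the (trivial, identity-like) wiring between consecutive stages; computing the $i$-th bit of $C_n$ therefore reduces to locating which block position $i$ falls in — an $O(1)$ computation on integers of magnitude $2^{O(n)}$, hence of $O(n)$ bits — and then calling the bit-generating procedure of the relevant sub-circuit. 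For $M$ and $\alpha$ this runs in time $n^{O(1)}$ by Corollary~\ref{cor_matrix}, and for $P_N$ it runs in time $O(\log N)=O(n)$ by \dlogtime-uniformity of the permanent circuits; it is essential here that the permanent circuits be \dlogtime-uniform rather than merely polynomial-time uniform, since $N$ is exponential in $n$.

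The main point requiring care — and where I expect the (routine) work to concentrate — is precisely this uniformity bookkeeping: one must check that the interface sizes of the three stages agree as functions of $n$ (the $N^2$ matrix entries output by $M$, the $O(N\log N)$ permanent bits output by $P_N$, the number of inputs expected by $\alpha$), that $N(n)$ itself can be read off in polynomial time from, say, the number of output gates of the $M$-circuit, and that $P_N$ is taken to deliver enough output bits of $\per$ for $\alpha$. None of this is conceptually hard, but it is where a careless argument could go wrong.
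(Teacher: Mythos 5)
Your proposal is correct and follows essentially the same route as the paper: compose the constant-depth uniform circuits for $M$, the hypothesized \dlogtime-uniform threshold circuits for the permanent on matrices of size $2^{O(n)}$, and the constant-depth circuits for $\alpha$, then observe that size, depth and polynomial-time constructibility of the description all carry over. The extra bookkeeping you spell out (output bits of the permanent, block offsets in the encoding, why \dlogtime-uniformity suffices at input size $2^{O(n)}$) is exactly the routine detail the paper leaves implicit.
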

\begin{proof}
  By Corollary~\ref{cor_matrix}, membership to $A$ is decided by the permanent
  of a matrix $M(x)$ of size $2^{O(n)}$. It is enough to compute the matrix
  $M(x)$ by constant-depth uniform circuits, then to plug the result into the
  uniform threshold circuits of depth $d(2^{O(n)})$ for the permanent of
  matrices of size $2^{O(n)}$, and finally to apply function $\alpha$ computed
  by constant-depth uniform circuits. The remaining circuits are again uniform
  threshold circuits of depth $O(d(2^{O(n)}))$.\qed
\end{proof}
Combining Corollary~\ref{cor_thresh_E}, Lemma~\ref{lemma_per} and
Corollary~\ref{cor_algo} yields the following.
\begin{corollary}
  If the permanent has \dlogtime-uniform polynomial-size threshold circuits of
  depth $d(n)$, then $\e\subseteq\dtime(n^{2^{O(d(2^{O(n)}))}})$.
\end{corollary}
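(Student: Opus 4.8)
The plan is to compose, in order, the three results just established. Assume the permanent has \dlogtime-uniform polynomial-size threshold circuits of depth $d(n)$. Since \dlogtime-uniformity implies \p-uniformity, Lemma~\ref{lemma_per} gives $\pp=\p$, which is the standing hypothesis under which Corollary~\ref{cor_algo} and, through it, the whole Lemma~\ref{lemma_main} machinery operate. We may also assume $d$ is unbounded (and, harmlessly, nondecreasing): a bounded $d$ would place the permanent in \dlogtime-\tczero, contradicting Allender's theorem~\cite{allender1999}, so the hypothesis --- hence the entire statement --- is vacuous in that case.

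Now fix an arbitrary language $A\in\e$. By Corollary~\ref{cor_thresh_E}, the hypothesis on the permanent yields threshold circuits for $A$ of size $s(n)=2^{O(n)}$ and depth $d'(n)=O(d(2^{O(n)}))$, constructible in polynomial time. I would then feed this family into Corollary~\ref{cor_algo}, with $2^{O(n)}$ in the role of $s(n)$ and $O(d(2^{O(n)}))$ in the role of $d(n)$. Since $\log s(n)=\Theta(n)$ and $\pp=\p$ is in force, the corollary returns a deterministic algorithm for $A$ running in time $(\log s(n))^{2^{O(d'(n))}}=n^{2^{O(d(2^{O(n)}))}}$. As $A\in\e$ was arbitrary, this is exactly the claimed inclusion $\e\subseteq\dtime(n^{2^{O(d(2^{O(n)}))}})$.

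The only hypothesis of Corollary~\ref{cor_algo} that has to be verified rather than merely quoted is that $(\log s(n))^{2^{O(d(n))}}$ --- here $n^{2^{O(d(2^{O(n)}))}}$ --- is superpolynomial in $n$; this is precisely what makes Lemma~\ref{lemma_main} applicable, ensuring that the polynomial-size Boolean circuit succinctly describing $C_n$, together with the input $x$, fits within the $(\log s)^{2^d}$ input budget. This holds automatically under the reduction above, because the unboundedness of $d$ forces the exponent $2^{O(d(2^{O(n)}))}$ to grow without bound. I expect this verification, together with keeping careful track of how the implicit constants in ``$s(n)=2^{O(n)}$'' and ``$d'(n)=O(d(2^{O(n)}))$'' propagate into the nested exponent $n^{2^{O(d(2^{O(n)}))}}$, to be the only real point of care; the remainder of the argument is a direct concatenation of Lemma~\ref{lemma_per}, Corollary~\ref{cor_thresh_E} and Corollary~\ref{cor_algo}.
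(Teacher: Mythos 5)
Your proof is correct and is exactly the paper's argument: the paper offers no proof beyond the phrase ``combining Corollary~\ref{cor_thresh_E}, Lemma~\ref{lemma_per} and Corollary~\ref{cor_algo}'', which is precisely the composition you carry out (Lemma~\ref{lemma_per} gives $\pp=\p$, Corollary~\ref{cor_thresh_E} gives size-$2^{O(n)}$, depth-$O(d(2^{O(n)}))$ polynomial-time-constructible circuits for any $A\in\e$, and Corollary~\ref{cor_algo} with $\log s(n)=\Theta(n)$ yields the time bound $n^{2^{O(d(2^{O(n)}))}}$). Your extra step of dismissing bounded $d$ via Allender's theorem to secure the superpolynomiality hypothesis of Corollary~\ref{cor_algo} is a harmless (indeed slightly more careful) addition that the paper leaves implicit.
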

This is in contradiction with the time hierarchy theorem as soon as
$d(n)=o(\log\log n)$, hence we have proved our main result:
\begin{corollary}[Theorem~\ref{th_main}]\label{cor_final}
  The permanent does not have \dlogtime-uniform polynomial-size threshold
  circuits of depth $o(\log \log n)$.
\end{corollary}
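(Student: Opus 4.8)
The plan is an indirect diagonalization: assume, for contradiction, that the permanent has \dlogtime-uniform polynomial-size threshold circuits of depth $d(n)=o(\log\log n)$, and derive that \e\ collapses into a subexponential deterministic time class, contradicting the time hierarchy theorem.

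First I would run the chain of implications already assembled. From the assumption, Lemma~\ref{lemma_per} gives $\pp=\p$ (via \sharpp-completeness of the permanent). With $\pp=\p$ in hand, Corollary~\ref{cor_thresh_E} shows that every $A\in\e$ has polynomial-time-constructible threshold circuits of size $2^{O(n)}$ and depth $O(d(2^{O(n)}))$, and Corollary~\ref{cor_algo} (resting on Lemma~\ref{lemma_main}) then converts such circuits into a deterministic algorithm for $A$ of running time $(\log s(n))^{2^{O(d(n))}}$ with $s(n)=2^{O(n)}$; this is exactly the content of the penultimate corollary, namely $\e\subseteq\dtime(n^{2^{O(d(2^{O(n)}))}})$.

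The remaining step is a calculation: since $d(n)=o(\log\log n)$, substituting the argument $2^{O(n)}$ yields $d(2^{O(n)})=o(\log\log 2^{O(n)})=o(\log n)$, so $2^{O(d(2^{O(n)}))}=2^{o(\log n)}$ and the time bound is $n^{2^{o(\log n)}}$. Thus $\e\subseteq\dtime(n^{2^{o(\log n)}})$, which contradicts the consequence $\e\not\subset\dtime(n^{2^{o(\log n)}})$ of the time hierarchy theorem. Hence no such circuit family exists.

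As for the main obstacle: the diagonalization itself is routine once the ingredients are in place, so the real work sits in Lemma~\ref{lemma_main}. There one evaluates a depth-$d$ threshold circuit level by level, counting the satisfied inputs of each threshold gate by appeal to Lemma~\ref{lemma_pp}; because that lemma contributes one polynomial overhead $p$ per level, the cost iterates into $(\log s)^{2^{O(d)}}$, and one must verify that this stays subexponential and that the base case $A_2$ (reading an input bit, with $(B,x)$ constrained to size below $(\log s)^{2^d}$) only costs $(\log s)^{O(2^d)}$. The subtle point there — which I would treat carefully — is that the polynomial $p$ in Lemma~\ref{lemma_pp} must be independent of the language $A$, forcing one to reduce through a fixed \pp-complete problem rather than invoke a fresh polynomial-time algorithm at each recursion level.
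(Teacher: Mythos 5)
Your proposal is correct and follows the paper's own route exactly: combine Lemma~\ref{lemma_per}, Corollary~\ref{cor_thresh_E} and Corollary~\ref{cor_algo} to get $\e\subseteq\dtime(n^{2^{O(d(2^{O(n)}))}})$, then note that $d(n)=o(\log\log n)$ makes this $n^{2^{o(\log n)}}$, contradicting the stated consequence of the time hierarchy theorem. Your identification of Lemma~\ref{lemma_main} (and the uniform polynomial in Lemma~\ref{lemma_pp}) as the real technical content also matches the paper's structure.
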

Since an arithmetic circuit can be simulated by a threshold one (addition and
multiplication are indeed in \dlogtime-uniform \tczero), we obtain the
following corollary.
\begin{corollary}\label{cor_arith}
  The permanent does not have \dlogtime-uniform polynomial-size arithmetic
  circuits of depth $o(\log \log n)$.
\end{corollary}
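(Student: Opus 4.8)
The plan is to deduce Corollary~\ref{cor_arith} directly from Corollary~\ref{cor_final} (= Theorem~\ref{th_main}) by a simulation argument: arithmetic circuits are no more powerful than threshold circuits, once one is careful about uniformity and about the increase in depth. Suppose, for contradiction, that the permanent has \dlogtime-uniform polynomial-size arithmetic circuits of depth $d(n)=o(\log\log n)$. I want to transform these into \dlogtime-uniform polynomial-size threshold circuits whose depth is still $o(\log\log n)$, thereby contradicting Corollary~\ref{cor_final}.

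First I would invoke the standard fact that the basic arithmetic operations on integers --- addition, subtraction and multiplication of two numbers --- lie in \dlogtime-uniform \tczero; that is, each is computable by a \dlogtime-uniform threshold circuit of constant depth and polynomial size in the bit-length of its arguments. The key quantitative point to pin down is a bound on the bit-length of the integers flowing through the arithmetic circuit: a polynomial-size arithmetic circuit over $\zz$ of depth $d$ produces, at each gate, an integer of magnitude at most doubly-exponential in $d$ (because products double the bit-length and there are at most $2^d$ layers of products along any path), hence of bit-length at most $2^{O(d)}\cdot\poly(n)$, which for $d=o(\log\log n)$ is still $\poly(n)$. So every gate of the arithmetic circuit can be replaced by a \tczero\ gadget operating on $\poly(n)$-bit numbers, of constant depth and polynomial size.

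Next I would assemble these gadgets. Replacing each of the $O(\poly(n))$ arithmetic gates by its constant-depth threshold gadget multiplies the depth by a constant factor, so the resulting threshold circuit has depth $O(d(n))=o(\log\log n)$ and size still polynomial in $n$. Uniformity is preserved because the replacement is local and uniform: a \dlogtime\ machine, given an index into the new circuit, can determine which original gate and which position inside its gadget the index refers to, and then consult the \dlogtime\ description of the original arithmetic circuit together with the (fixed, hence \dlogtime-describable) \tczero\ gadget. Finally I would add a constant-depth threshold gadget at the output that checks whether the integer output equals the value $\per_n$ on the given $0$-$1$ matrix, so that the whole thing is a threshold circuit recognizing the corresponding Boolean language; this does not change the asymptotics. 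This yields \dlogtime-uniform polynomial-size threshold circuits of depth $o(\log\log n)$ for the permanent, contradicting Corollary~\ref{cor_final}.

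The main obstacle I expect is the bookkeeping around \dlogtime-uniformity: one must verify that the composition of the constant-depth \tczero\ gadgets, glued along the structure of the arithmetic circuit, is itself \dlogtime-uniform, i.e. that the indexing arithmetic (locating gate and sub-gate from a single binary index, and reconciling the bit-lengths, which may vary from gate to gate) can be carried out in time logarithmic in the total size. This is routine but is the only place where something could go wrong; everything else (the doubly-exponential-in-$d$ bound on bit-lengths, the constant-depth blow-up, the size bound) is immediate. In the write-up I would therefore state the \tczero-simulation of integer arithmetic as a known fact, note the bit-length bound explicitly, and leave the uniformity verification as a one-line remark.
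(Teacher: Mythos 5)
Your proposal is correct and follows essentially the same route as the paper, whose entire proof of this corollary is the one-line observation that arithmetic circuits can be simulated by threshold circuits because the arithmetic operations lie in \dlogtime-uniform \tczero, after which Theorem~\ref{th_main} applies. You merely make explicit the bookkeeping (the $2^{O(d)}\cdot\poly(n)$ bit-length bound, the constant-factor depth blow-up, and the preservation of \dlogtime-uniformity under local gadget substitution) that the paper leaves implicit.
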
\bigskip

\noindent\textit{Acknowledgments ---} The authors want to thank Eric Allender
for useful discussions.

\bibliographystyle{abbrv}
\bibliography{permanent}

\end{document}